\newtheorem{theorem}{Theorem}
\newtheorem*{theorem*}{Theorem}
\newtheorem{lemma}{Lemma}
\newtheorem*{lemma*}{Lemma}
\newcommand{\ket}[1]{|{#1}\rangle}
\newcommand{\bra}[1]{\langle{#1}|}
\DeclarePairedDelimiter\abs{\lvert}{\rvert}%
\newcommand{\kron}[2]{\ket{#1}\bra{#2}}
\begin{document}
\title{An upper bound on the time required to implement unitary operations }
\author{Juneseo Lee$^*$, Christian Arenz$^*$, Daniel Burgarth$^\dagger$, and Herschel Rabitz$^*$} 
\address{$^*$Frick Laboratory, Princeton University, United States}
\address{$^\dagger$Department of Physics and Astronomy, Center of Engineered Quantum Systems, Macquarie University, Sydney, NSW 2109, Australia}
\date{\today}

\begin{abstract}
We derive an upper bound for the time needed to implement a generic unitary transformation in a $d$ dimensional quantum system using $d$ control fields. We show that given the ability to control the diagonal elements of the Hamiltonian, which allows for implementing any unitary transformation under the premise of controllability, the time $T$ needed is upper bounded by $T\leq \frac{\pi d^{2}(d-1)}{2g_{\text{min}}}$ where $g_{\text{min}}$ is the smallest coupling constant present in the system. We study the tightness of the bound by numerically investigating randomly generated systems, with specific focus on a system consisting of $d$ energy levels that interact in a tight-binding like manner.
\end{abstract}
\maketitle

\section{\label{sec:intro} Introduction}
Controlling quantum systems through external classical fields on a time scale that is below the typical decoherence timescale is crucial for employing quantum mechanical features for future quantum devices. In particular, the length $T$ of the classical pulses, sometimes referred to as the \emph{minimum gate time}, used to implement a target unitary transformation $U_{g}$ that allows to carry out a specific quantum information task should scale in a reasonable manner with the underlying Hilbert space dimension $d$. 
While lower bounds on $T$, known as \emph{quantum speed limits} (for a detailed review we refer to \cite{SpeedLimitsRev}, and works cited therein), give inherent limits on how fast unitary operations (states) can be implemented (prepared) through shaped classical fields, such lower bounds do not yield much insights on how much time is at most needed to achieve this task. Thus, an upper bound on $T$ is highly desirable. Such an upper bound should depend on the target unitary transformation, the Hamiltonian describing the quantum system under consideration, the number of controls available to implement the target transformation, and possible constraints, such as energy and bandwidth in the control fields. Clearly, if every matrix element of the Hamiltonian describing a $d$ dimensional quantum system can be controlled instantaneously and arbitrarily, every unitary transformation in the unitary group $\text{U}(d)$ can be implemented instantaneously through $d^{2}$ (unconstrained) classical fields controlling each matrix element. But what if we have only restricted access to the system under consideration? How many controls, and which controls, then  allow for implementing every $U_{g}\in \text{U}(d)$ in a time at most $\mathcal O(\text{poly}(d))$? Here we show that if the diagonal elements of the Hamiltonian describing a $d$ dimensional quantum system can be generically controlled through classical fields, and if the system is controllable with these fields, the time to implement every unitary operations scales at most as $\mathcal O(d^{3})$. We note, however, that for qubit systems consisting of $n$ qubits (i.e., $d=2^{n}$) our upper bound scales exponentially in $n$. This should not be surprising, as the time $T$ to implement a generic unitary transformation scales exponentially in the number of qubits, which can be traced back to the fact that most unitary operations cannot be implemented efficiently, i.e., in a time that scales polynomially in the number of qubits \cite{nielsen2002quantum}. For further reading regarding time-optimal control and quantum computing we refer to the seminal works \cite{khaneja2001time, nielsen2006quantum}, while an upper bound on $T$ for qubit systems was developed in \cite{arenz2018controlling}.

While in this work we mostly focus on networks determined by a set of basis states $\{\ket{n}\}$ describing a $d$ dimensional quantum system, we also consider the generalization to networks consisting of qubits. Here the associated graph is not determined by a coupling between two kets, but instead by qubits coupled through an arbitrary two-body interaction term. Based on the number of CNOT gates needed to create a specific unitary transformation \cite{barenco1995elementary, mottonen2004quantum, shende2006synthesis}, we thus also provide an upper bound on $T$ to implement a given $U_{g}$ on a $n$-qubit network using $2n$ local controls.  

 One way to obtain an upper bound on $T$ is to find a sequence of gates that corresponds to some application of the controls that allows for creating a generic unitary transformation. Upper bounding the corresponding time needed to implement the sequence then yields an upper bound for implementing a generic unitary transformation. This strategy has, for instance, been successfully applied for a $n$-qubit network to characterize the set of gates that can be implemented in a time at most polynomial in the number of qubits  using $2n$ local controls \cite{arenz2018controlling}. Here we build up on the concepts developed in \cite{arenz2018controlling} and show for a $d$ dimensional quantum system described by a Hamiltonian 
 \begin{align}
  \label{eq:Ham}
  H_{0}=\sum_{n,m}g_{n,m}|n\rangle\langle m|,
  \end{align}
that if the associated graph is connected, the set of controls $\mathcal C=\{|n\rangle\langle n|\}_{n=1}^{d}$ allows for implementing every $U_{g}\in\text{U}(d)$ in a time which is upper bounded by 
\begin{align}
\label{eq:upperbound}
T\leq \frac{\pi d^2 (d-1)}{2g_{\text{min}}} ,
\end{align}
where $g_{\text{min}}= \min_{n\neq m} \{\abs{g_{n,m}}\}$. Thus, fixing a basis to represent the Hamiltonian $H_{0}$ of the quantum system under consideration and controlling the diagonal elements of $H_{0}$ allows for implementing every unitary transformation in a time that is upper bounded by \eqref{eq:upperbound}, provided the system is fully controllable through the controls used. In fact we show that if the graph associated to \eqref{eq:Ham} is connected, the controls $\mathcal C$ generate a fully controllable system and the time $T$ to implement every unitary operation is upper bounded by \eqref{eq:upperbound}.

The work is organized as follows. We begin in section 2.1 by representing a quantum system described by the Hamiltonian \eqref{eq:Ham} as a weighted and undirected graph. We show in section 2.2 that we can reduce such a graph into a single-edge without a time cost using dynamical decoupling. In section 3.1 we present our first result.  Assuming the graph is connected, by propagating single-edge evolutions across a path connecting any pair of vertices, any  interaction between a pair of vertices can be created in a time linear in $d$. In section 3.2 we derive our main result \eqref{eq:upperbound}. The obtained results rely on the assumption that the control fields corresponding to the controls $\mathcal C$ are unconstrained, so that every unitary operation $v=\exp(-i\alpha C)$ with $C\in \mathcal C$ can be implemented instantaneously. In section 4 we numerically study the tightness of the derived upper bound by considering examples, followed by some econcluding remarks in section 5.  
\section{\label{sec:motiv} Preliminaries}
We consider a $d$ dimensional quantum control system  evolving on the unitary group $\text{U}(d)$ described by the Schr\"odinger equation for the time evolution operator
\begin{align}
\label{eq:controlsystem}
\frac{d}{dt}U(t)=-iH(t)U(t),
\end{align}
where we set $\hbar=1$ and the time dependent Hamiltonian is given by 
\begin{align}
\label{eq:totalham}
H(t)=\sum_{n\neq m}g_{n,m}|n\rangle\langle m|+\sum_{n}f_{n}(t)P_{n}.
\end{align}
We refer to
\begin{align}
\label{eq:driftham}
H_{0}=\sum_{n\neq m}g_{n,m}|n\rangle\langle m|,
\end{align}
as the drift Hamiltonian and the set of controls $\mathcal C=\{P_{n}\}_{n=1}^{d}$ is given by orthonormal projections $P_{n}=|n\rangle\langle n|$, where $f_{n}(t)$ are the corresponding control fields, which are throughout this work assumed to be unconstrained. Typically, the goal in quantum control is to shape the control fields in such a way that for some time $T$ the solution $U(T)$ to $\eqref{eq:controlsystem}$ is given by a desired unitary transformation $U_{g}$.  We call the control system fully controllable if every $U_{g}\in\text{U}(d)$ can be implemented through shaping the control fields. It is well known that the system is fully controllable iff the dynamical Lie algebra \cite{BookDalessandro, Elliot} generated by the drift Hamiltonian and the set of controls spans the full space, i.e., the algebra $\mathfrak{u}(d)$ consisting of $d\times d$ skew hermitian matrices. However, how much time $T$ does it take to implement a generic unitary transformation? 
In order to derive the upper bound \eqref{eq:upperbound}, we start by explaining how $H_{0}$ can be represented as a weighted, undirected graph, and how the controls $\mathcal C$ can be used to instantaneously remove vertices from the graph.  

\subsection{Quantum control systems and graphs}
We first note that we can rewrite the drift Hamiltonian as 
\begin{align}
H_{0}&=\sum_{n> m} \abs{g_{n,m}}(e^{i\phi_{n,m}}\kron{n}{m}+e^{-i\phi_{n,m}}\kron{m}{n}),
\end{align}
where the relative phases $\phi_{n,m}$ can be removed by applying the unitary transformations $v=\exp(-i\alpha C)$ with $C\in\mathcal C$, which can be implemented instantaneously assuming that the control fields $f_{n}(t)$ are unconstrained. We remark here that this assumption is a reasonable approximation in the case where the strength of the control fields can be made much larger than the typical energy scales of the systems being considered. Thus, in the case of unconstrained control fields removing the phases does not take time. Using the controls we can therefore map $H_{0}$ given by \eqref{eq:driftham} into the drift Hamiltonian 
\begin{align}
\label{eq:drifttrans}
H_{0}=\sum_{n>m}|g_{n,m}|B_{n,m},
\end{align}
without a time cost, where we defined $B_{n,m}=|n\rangle\langle m|+|m\rangle\langle n|$. In order to derive the bound \eqref{eq:upperbound} we can hence equivalently work with the drift Hamiltonian given by \eqref{eq:drifttrans}. 

The operators $B_{n,m}$ describe interactions between the states $\ket{n}$ and $\ket{m}$, whereas $\abs{g_{n,m}}$ is the corresponding interaction strength. We can visualize these interactions through a weighted and undirected graph $G(V,E)$. The set of vertices $V$ correspond to the basis states $\{\ket{n}\}$ spanning the Hilbert space of the quantum system and are labeled by $n$, the set of edges $E$ labeled by $(n,m)$ describe interactions between vertices $n$ and $m$, and the interaction strength $\abs{g_{n,m}}$ corresponds to the weights. Later on we will also consider qubit graphs in which $V$ and $E$ represent qubits and two-body interactions, respectively.

For a drift Hamiltonian $H_{0}$ of the form \eqref{eq:drifttrans} we denote the corresponding graph by $G_{0}(V_{0},E_{0})$. In order to upper bound the time to implement a generic target unitary transformation it is useful to introduce the complete graph $G_{K}(V_{K},E_{K})$ which consists of $|V_{K}|=d$ vertices and $|E_{K}|=\frac{d(d-1)}{2}$ edges. 

We proceed by first showing how to remove edges from $G_{0}$ instantaneously, followed by upper bounding the time to connect a generic pair of vertices, i.e., creating a generic $B_{n,m}$ with $(n,m)\in E_{K}$.

\subsection{Dynamical decoupling: removing edges without a time cost}

Dynamical decoupling allows for removing unwanted interactions of a Hamiltonian $H_{0}$ by rapidly applying a set of unitary transformations $V$ \cite{viola1999dynamical,lidar2013quantum} in a Suzuki-Trotter type sequence, 
\begin{align}
\label{eq:trotter}
\Lambda_{t/n}=\prod_{v\in V} v^{\dagger}\exp\left(-iH_{0}\frac{t}{|V|n}\right)v,
\end{align}
which converges in the limit of infinitely fast operations $(n\to\infty)$ to a unitary operation $U=\exp(-i M(H_{0})t)$, where the map $M$ is given by   
\begin{align}
\label{eq:maptrotter}
M(\cdot)=\frac{1}{|V|}\sum_{v\in V}v^{\dagger}(\cdot)v. 
\end{align}
We remark here that such maps are typically studied in the context of Hamiltonian simulation (see e.g.,\cite{bremner2005simulating}), dynamical decoupling (see e.g., \cite{viola1999dynamical,lidar2013quantum,arenz2017dynamical}), and unital quantum channels with equal weights \cite{nielsen2002quantum}. We further note that a concatenation $M_{1}(M_{2}(H_{0}))=\tilde{M}(H_{0})$ yields again a map of the form \eqref{eq:maptrotter}, so that $\tilde{M}$ can be obtained by a sequence of the form \eqref{eq:trotter}. 

Now, taking $V_{j}=\{\mathds{1},v_{j}\}$ where $v_{j}=\exp(-i\pi P_{j})=\mathds{1}-2P_{j}$ we have for $H_{0}$ given by \eqref{eq:drifttrans},  
\begin{align}
M_{j}(H_{0})=H_{0}-(P_{j}H_{0}+H_{0}P_{j}),
\end{align}
so that $M_{j}$ maps the graph $G_{0}$ into a subgraph in which the vertex $j$ is removed. Iteratively removing vertices by constructing concatenations of different $M_{j}$'s therefore allows to map $G_{0}$ into a subgraph containing only a single edge. We conclude that there always exists a set of unitary transformations generated by the controls $\mathcal C$ that allows for mapping the natural evolution given by $H_{0}$ into an evolution generated by $|g_{n,m}|B_{n,m}$ without a time cost  \cite{zeier2004gate}. Thus, for the control system defined by \eqref{eq:totalham} any unitary operation of the form 
\begin{align}
\label{eq:intunitary}
S_{n,m}(\alpha)=\exp(-i\alpha B_{n,m}),
\end{align} 
with $(n,m)\in E_{0}$ can be implemented through the controls in a time $t_{n,m}=\alpha/|g_{n,m}|$. We note that within the subspace spanned by $\{\ket{n},\ket{m}\}$ the operation $S_{n,m}(\alpha)$ induces oscillations between the states $\ket{n}$ and $\ket{m}$, and for $\alpha=\pi/2$ the states are swapped. Adopting the terminology used in quantum information, we refer to the corresponding unitary transformation $S_{n,m}(\pi/2)$ as a SWAP gate, noting that here two basis states are swapped rather than qubit states.   

\section{\label{sec:res1} Results}
So far we have shown that unitary operations generated by interactions $B_{n,m}$ of the Hamiltonian \eqref{eq:drifttrans} can be implemented in a time which is of the order of the inverse energy associacted with the interaction. However, how much time does it take to create interactions that are not present in $H_{0}$? In order to upper bound the time to implement a unitary operation generated by such interactions, we now show how to upper bound the time to create arbitrary interactions.        
\subsection{Upper bounding the time to create interactions}
From now on we assume that $G_{0}$ is connected and upper bound the time to create a generic $S_{n,m}$ with $(n,m)\in E_{K}$, where $E_{K}$ is the set of edges of the complete graph $G_{K}$. We establish the following lemma. 
\begin{lemma}\label{th:11}
Let the graph associated to the drift Hamiltonian \eqref{eq:drifttrans} be connected and denote by $g_{\text{min}}$ the smallest edge weight. Then for the control system \eqref{eq:controlsystem} a unitary operation $S_{n,m}(\alpha)\in \text{U}(d)$ with $(n,m)\in E_{K}$ of the form \eqref{eq:intunitary} can be implemented in a time $t_{n,m}$ which is upper bounded by 
\begin{align} 
\label{eq:upperboundtnm}
t_{n,m}\leq \frac{|\alpha|+\pi (d-2)}{g_{\text{min}}}.
\end{align}
\end{lemma}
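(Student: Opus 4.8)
The plan is to build the target interaction $B_{n,m}$ by conjugating an already-available single-edge evolution along a path in the connected graph $G_0$. Since $G_0$ is connected, there is a path $n = p_0, p_1, \dots, p_k = m$ in $E_0$ with $k \le d-1$ edges. The key algebraic fact I would exploit is that a SWAP gate $S_{p,q}(\pi/2)$ conjugates the interaction operator on an adjacent edge into the interaction operator on the edge obtained by relabeling one endpoint: schematically, $S_{p,q}(\pi/2)^\dagger B_{q,r} S_{p,q}(\pi/2) \propto B_{p,r}$ (up to a sign that can be absorbed, since by Section 2.2 and the phase-removal discussion we may freely flip signs of the $B$'s using the controls $\mathcal C$). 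So the first step is to verify this conjugation identity by a direct $2\times 2$ (really $3\times 3$, on $\mathrm{span}\{\ket{p},\ket{q},\ket{r}\}$) computation.

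The second step is to chain these conjugations. Starting from the edge $(p_{k-1}, p_k) = (p_{k-1}, m) \in E_0$, on which $S_{p_{k-1},m}(\alpha)$ is directly implementable in time $|\alpha|/|g_{p_{k-1},m}| \le |\alpha|/g_{\min}$, I conjugate successively by SWAP gates $S_{p_{k-2},p_{k-1}}(\pi/2)$, then $S_{p_{k-3},p_{k-2}}(\pi/2)$, and so on, walking the "live" endpoint back along the path from $p_{k-1}$ to $p_0 = n$. Each such SWAP acts on an edge $(p_{j-1},p_j) \in E_0$, so by Lemma-free Eq.~\eqref{eq:intunitary} it costs time $\frac{\pi/2}{|g_{p_{j-1},p_j}|} \le \frac{\pi/2}{g_{\min}}$. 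Crucially, each conjugation uses the SWAP \emph{and its inverse} (the other $S(\pi/2)$, or equivalently $S(-\pi/2)=S(\pi/2)^\dagger$), so each intermediate edge contributes a factor of $\pi$ rather than $\pi/2$ to the time budget. There are $k-1 \le d-2$ such intermediate edges, giving total time
\begin{align}
t_{n,m} \le \frac{|\alpha|}{g_{\min}} + (k-1)\cdot\frac{\pi}{g_{\min}} \le \frac{|\alpha| + \pi(d-2)}{g_{\min}},
\end{align}
which is exactly \eqref{eq:upperboundtnm}. I would also note that the conjugating gates at the two ends might each be needed only once in a more careful accounting, but the stated bound follows already from this crude count.

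The main obstacle I anticipate is bookkeeping the signs and phases carefully: the conjugation identity for $B$'s under SWAPs naturally produces signs (and conceivably leftover diagonal phases on the spectator levels), and one must argue that all of these are correctable for free via the controls $\mathcal C$ — either absorbed into the $\phi_{n,m}$ phase-removal step already described in Section 2.2, or cancelled because the forward and backward SWAP in each conjugation pair undo any spectator phase. A secondary subtlety is making sure the conjugations don't interfere: when I conjugate $B_{p_{j-1},p_j}$-type terms, the path vertices are distinct, so the relevant $3$-dimensional subspaces overlap only in the single "live" vertex at each stage, and the identity applies cleanly. The cleanest writeup is probably an induction on the path length $k$, with the base case $k=1$ being the directly-implementable edge and the inductive step prepending one SWAP-conjugation at cost $\pi/g_{\min}$.
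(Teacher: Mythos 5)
Your proposal is correct and follows essentially the same route as the paper: both propagate the interaction along a path in $G_{0}$ by sandwiching a single directly-implementable edge rotation between SWAP gates on the $k-1\leq d-2$ intermediate edges (your conjugation picture and the paper's ``permute forward, act, permute back'' picture are the same sequence), each SWAP pair costing $\pi/g_{\text{min}}$ and the edge rotation costing $|\alpha|/g_{\text{min}}$. Your explicit attention to the residual phases produced by the conjugation (which the controls $\mathcal C$ remove for free) is a point the paper glosses over, but it does not change the argument or the bound.
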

\begin{proof}
We relabel the vertices of $G_{0}$ so that the vertices $n$ and $m$ are labeled by $1$ and $N$ and we consider a path connecting the vertices $1$ and $N$ by passing through connected vertices $1,2,\cdots,N$, i.e., $(j,j+1)\in E_{0}$ where $j=1,\cdots N-1$. Using dynamical decoupling we can instantaneously reduce $G_{0}$ to a single edge $(1,2)$, so that the SWAP gate $S_{1,2}(\pi/2)$ can be implemented in a time $t_{1,2}=\frac{\pi}{2\abs{g_{1,2}}}$.  Then, iteratively reducing $G_{0}$ to edges $(j,j+1)$ up to $(N-2,N-1)$ allows for successively implementing SWAP gates on adjacent vertices, which  permutes the vertices according to $12\cdots N \to 23\cdots (N-1)1N$. This takes $\tau=\frac{\pi}{2}\sum_{j=1}^{N-2}\frac{1}{\abs{g_{j,j+1}}}$ amount of time. We proceed by reducing $G_{0}$ to the single edge $(N-1,N)$ and implement the operation $S_{N-1,N}(\alpha)$, which takes $t_{N-1,N}=\frac{\abs{\alpha}}{\abs{g_{N-1,N}}}$ amount of time.  
Iteratively restoring the order of the vertices by performing $(N-2)$ SWAP operations so that $S_{1,N}(\alpha)$ is effectively implemented takes time $\tau$. Thus, the gate $S_{1,N}(\alpha)$ can be implemented in a time 
\begin{align}
\label{eq:timesequence}
t_{1,N}&=t_{N-1,N}+2\tau \nonumber \\
&=\frac{|\alpha|}{|g_{N-1,N}|}+\pi\sum_{j=1}^{N-2}\frac{1}{|g_{j,j+1}|}. 
\end{align}
Clearly, $N$ is upper bounded by the total number of vertices present in $G_{0}$, which is given by the dimension $d$ of the quantum system. Introducing the smallest edge weight present in $G_{0}$ as $g_{\text{min}}=\min_{(n,m)\in E_{0}}\{|g_{n,m}|\}$ we find that the time $t_{n,m}$ to implement a generic $S_{n,m}(\alpha)$ with $(n,m)\in E_{K}$ is therefore upper bounded by \eqref{eq:upperboundtnm}. 

\end{proof}
We remark here that the sequence of SWAP operations used to obtain \eqref{eq:timesequence} is not necessarily time optimal. However, minimizing \eqref{eq:timesequence} over all paths connecting vertices $n$ and $m$ yields the tightest version of the obtained bound.

Instead of associating $G_{0}$ with coupled states described by $B_{n,m}$, we can also consider the case where $G_{0}$ describes a $n$-qubit network (i.e., here qubits represent vertices labeled by $i$ and edges labeled by $(i,j)$  are given by two body interaction terms) described by the drift Hamiltonian  
\begin{align}
\label{eq:driftqubit}
H_{0}=\sum_{\mathclap{\substack{i\in V, \\  \alpha\in\{x,y,z\}}}}\omega_{\alpha}^{(i)}\sigma_{\alpha}^{(i)}+\sum_{\mathclap{\substack{(i, j)\in E,\\\alpha,\beta\in\{x,y,z\}}}} g_{\alpha,\beta}^{(i,j)}\sigma_{\alpha}^{(i)}\sigma_{\beta}^{(j)},
\end{align}
where $\sigma_{\alpha}^{(i)}$ denote the Pauli spin operators acting only non-trivially on the $i$th qubit, and $\omega_{\alpha}^{(i)}$ and $g_{\alpha,\beta}^{(i,j)}$ are energy splittings and coupling constants, respectively. If each qubit can be addressed with two orthogonal control fields described by the set of controls $\mathcal C=\{\sigma_{x}^{(i)},\sigma_{y}^{(i)}\}_{i=1}^{n}$, as shown in  \cite{arenz2018controlling} the time $t_{i,j}^{\text{cnot}}$ to implement a CNOT gate on qubits $i$ and $j$ is upper bounded by,     
\begin{align}
\label{eq:timeCNOT}
t_{i,j}^{\text{cnot}}\leq \frac{\pi}{g_{\text{min}}} \left(\frac{4\,\text{dist}(i,j)-3}{4} \right).
\end{align}  
Here $g_{\text{min}}$ is the smallest non-zero coupling constant present in the Hamiltonian \eqref{eq:driftqubit} and $\text{dist}(i,j)$ denotes the geodesic path distance  between two qubits $i$ and $j$ given as the smallest number of edges in a path connecting the two considered qubits, noting that  $\text{dist}(i,j)\leq n-1$.

\subsection{Upper bounding the time to implement generic unitary operations}
In order to understand how to upper bound the evolution time of an arbitrary unitary, we can decompose it into elementary interactions   (\ref{eq:intunitary}) and local controls. In general, the number of terms in such decomposition is hard to characterize \cite{d2002uniform}. For the special case of decomposing an element of $U\in \text{U}(d)$ we can however use the proof of decompositions of unitaries given in \cite{nielsen2002quantum}, which shows that \[U=V_1 V_2 \ldots V_k,\] where $k\le d(d-1)/2,$ and each $V_k$ acts nontrivially only on two specific levels $n_k,m_k$ (and is therefore isomorphic to an element of $\text{U}(2)$). By the Euler decomposition, we can furthermore decompose such an element into a gate sequence including rotations around $z$, which can be implemented through the controls $\mathcal C$ instantaneously, and into a rotation around $x$, which by Eq. (\ref{eq:upperboundtnm}) can maximally take a time of $\frac{\pi d}{g_{min}}$. We therefore obtain our final result, which we summarize in the following Theorem. 
\begin{theorem}\label{maintheorem} \textit{Let the graph associated with the drift Hamiltonian \eqref{eq:drifttrans} be connected and denote by $g_{\text{min}}$ the smallest edge weight. Then for the control system \eqref{eq:controlsystem} the time $T$ to implement a unitary operation $U_{g}\in\text{U}(d)$ is upper bounded by}
\begin{align} 
\label{eq:14}
T\le \frac{\pi d^2 (d-1)}{2g_{min}}.
\end{align}
\end{theorem}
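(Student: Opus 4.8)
The plan is to reduce an arbitrary $U_{g}\in\text{U}(d)$ to a bounded number of the elementary interaction gates $S_{n,m}(\alpha)$ of \eqref{eq:intunitary}, whose cost is governed by Lemma~\ref{th:11}, while keeping every remaining ingredient diagonal in the basis $\{\ket{n}\}$ and hence time-free. First I would invoke the two-level (Gaussian-elimination) decomposition of unitaries from \cite{nielsen2002quantum}: every $U_{g}\in\text{U}(d)$ factors as $U_{g}=V_{1}V_{2}\cdots V_{k}$ with $k\le d(d-1)/2$, where each $V_{j}$ acts nontrivially only on a pair of basis states $\ket{n_{j}},\ket{m_{j}}$ and is hence isomorphic to an element of $\text{U}(2)$. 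Since the control fields are unconstrained, the factors $V_{j}$ can be implemented one after another, so $T\le\sum_{j=1}^{k}t_{j}$ with $t_{j}$ the time to implement $V_{j}$.

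Next I would bound each $t_{j}$. On the two-dimensional subspace $\text{span}\{\ket{n_{j}},\ket{m_{j}}\}$ the operator $B_{n_{j},m_{j}}$ acts as the Pauli $\sigma_{x}$, so $S_{n_{j},m_{j}}(\beta/2)$ is exactly the rotation $R_{x}(\beta)$ on that block (identity elsewhere). Writing $V_{j}$ in Euler form $V_{j}=e^{i\chi}\,R_{z}(\phi)\,R_{x}(\beta)\,R_{z}(\psi)$, with every factor acting as the identity off the pair $\{n_{j},m_{j}\}$, separates it into the single $x$-rotation $R_{x}(\beta)=S_{n_{j},m_{j}}(\beta/2)$ and three pieces — the two $z$-rotations and the block phase $e^{i\chi}$ — that are all diagonal in $\{\ket{n}\}$. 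Because the controls $\mathcal C=\{P_{n}\}$ generate the entire diagonal subgroup of $\text{U}(d)$ instantaneously (the same fact used in section~2.1 to strip the phases $\phi_{n,m}$ from $H_{0}$), these three factors cost no time. Hence $t_{j}$ equals the time to implement a single $S_{n_{j},m_{j}}(\alpha)$ with $(n_{j},m_{j})\in E_{K}$; using the $2\pi$-periodicity of $\exp(-i\alpha B_{n_{j},m_{j}})$ one may take $|\alpha|\le\pi$ (more than the Euler decomposition requires), and Lemma~\ref{th:11} then gives $t_{j}\le\frac{\pi+\pi(d-2)}{g_{\text{min}}}\le\frac{\pi d}{g_{\text{min}}}$.

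Finally, summing over the at most $d(d-1)/2$ factors yields
\[
T\ \le\ \sum_{j=1}^{k}t_{j}\ \le\ \frac{d(d-1)}{2}\cdot\frac{\pi d}{g_{\text{min}}}\ =\ \frac{\pi d^{2}(d-1)}{2g_{\text{min}}},
\]
which is \eqref{eq:14}. The point I expect to need the most care is the accounting for the block phase $e^{i\chi}$: it is \emph{not} a global phase of the ambient $\text{U}(d)$ element and so cannot be discarded, but it is diagonal and hence still free through $\mathcal C$, so the reduction to one interaction gate per two-level factor goes through. I would also note, as the authors already remark after Lemma~\ref{th:11}, that the resulting bound is generically far from tight — for each pair $(n_{j},m_{j})$ one can optimize the connecting path, and consecutive diagonal layers between successive $V_{j}$'s can be merged — which is exactly what the numerical study in section~4 probes.
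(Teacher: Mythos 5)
Your proposal follows essentially the same route as the paper: the two-level decomposition $U_{g}=V_{1}\cdots V_{k}$ with $k\le d(d-1)/2$ from \cite{nielsen2002quantum}, an Euler decomposition of each factor into instantaneous diagonal ($z$-type) operations plus one $x$-rotation bounded via Lemma~\ref{th:11} by $\pi d/g_{\text{min}}$, and the final multiplication of the two counts. Your added care about the non-global block phase $e^{i\chi}$ being diagonal (hence free) is a correct and slightly more explicit treatment of a point the paper passes over, but the argument is the same.
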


Remarkably, in contrast to the upper bound obtained in \cite{arenz2018controlling}, the derived bound \eqref{eq:14} is independent of the target unitary transformation $U_{g}$ as well as the accuracy of implementing $U_{g}$. 

If we consider again a $n$-qubit network described by the drift Hamiltonian \eqref{eq:driftqubit} and denote by $N_{\text{CNOT}}(U_{g})$ the number of CNOT gates needed to create a specific gate $U_{g}\in \text{SU}(2^{n})$ by locally controlling each qubit, according to \eqref{eq:timeCNOT} the time $T(U_{g})$ needed to create $U_{g}$ is then upper bounded by 
\begin{align}
T(U_{g})\leq \frac{\pi (4n-7)}{4 g_{\text{min}}}   N_{\text{CNOT}}(U_{g}).
\end{align}
However, note that for creating every $U_{g}\in \text{SU}(2^{n})$, the number of CNOT gates needed must scale exponentially in $n$, whereas the prefactors have been successively improved in the last decade \cite{barenco1995elementary, mottonen2004quantum, shende2006synthesis}.

\section{\label{sec:tightness} Tightness of the Bounds}
In order to analyze the tightness of the obtained bounds, we compare the bounds \eqref{eq:upperboundtnm} and \eqref{eq:14} to previously derived lower bounds \cite{Lee2018}, as well as to minimum gate times obtained from numerical gate optimization using the GRAPE algorithm \cite{grape}, which is included in the Python package QuTip \cite{qutip}.
Similar to the method utilized in \cite{Lee2018}, a population binary search algorithm is run over $T$ until the gate error is smaller than $10^{-4}$. 

\subsection{\label{sec:nlevel1} d-level System}
We first consider a quantum system consisting of $d$ energy levels interacting in a tight-binding like manner described by the drift Hamiltonian 
\begin{align} 
\label{eq:33}
H_{0}=J\sum_{j=1}^{d-1}(\ket{j}\bra{j+1}+\ket{j+1}\bra{j}),
\end{align}
where $J$ is the coupling strength chosen to be $J=1/\sqrt{2(d-1)}$ so that $\Vert H_{0}\Vert=1$ with $\Vert \cdot \Vert$ being the Hilbert Schmidt norm. 
We assume that the energy levels $\ket{j}$ can be controlled arbitrarily so that the set of controls is given by $\mathcal C=\{\ket{j}\bra{j}\}_{j=1}^{d}$. The goal is to implement a SWAP operation (i.e., $U_{g}=S_{1,d}(\pi/2)$ ) between the first and the $d$th level. According to Lemma \eqref{th:11}, the time $t_{1,d}^{\text{swap}}$ needed to implement $U_{g}$ is upper bounded by 
\begin{align}
\label{nlevelupper}
t_{1,d}^{\text{swap}}\leq \frac{\pi}{2}(2d-3)\sqrt{2(d-1)}.
\end{align}
Based on the results in \cite{Lee2018} with further details found in Appendix \eqref{ap1} we can also lower bound $t_{1,d}^{\text{swap}}$ by 
\begin{align}
\label{nlevellower}
\sqrt{2}(d-1)\leq t_{1,d}^{\text{swap}}. 
\end{align} 
In figure 1 we plot the upper bound (black curve) and the lower bound (blue curve), as well as the minimum time $T$ needed to implement the SWAP operation $S_{1,d}(\pi/2)$ obtained from numerical gate optimization using GRAPE (green curve) as a function of the number of levels $d$. 

\begin{figure}
\centering 
\includegraphics[width=0.60\columnwidth]{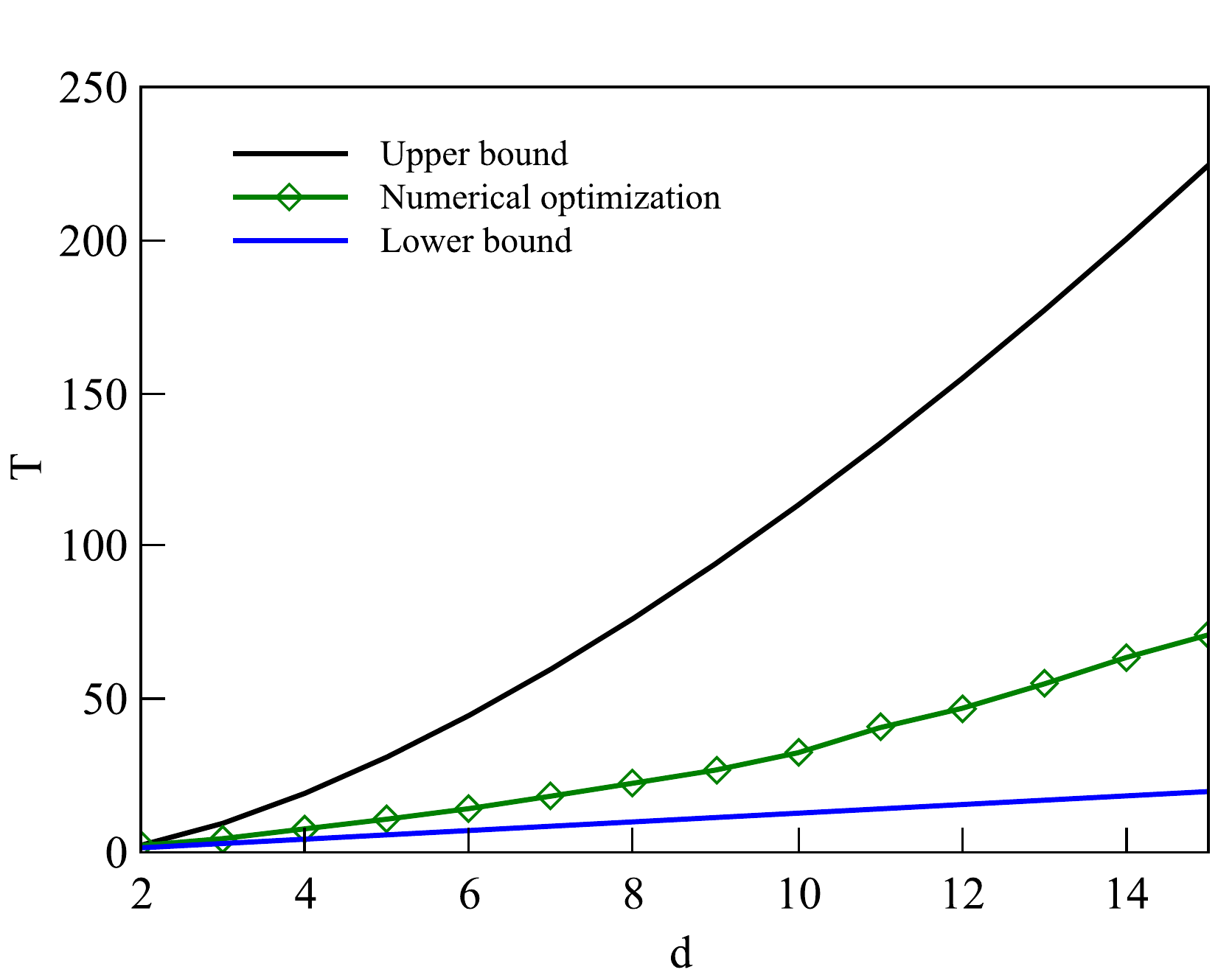}
\caption{Comparison of the time $T$ as a function of $d$ to implement a SWAP operation between the 1st and the $d$th level for the d-level system \eqref{eq:33} obtained from numerical gate optimization using GRAPE, with the upper bound (black) given in \eqref{nlevelupper} and the lower bound (blue) given in \eqref{nlevellower}.}
\end{figure}

First we observe that the time obtained from numerical optimization lies between the upper and lower bounds, as expected.
Furthermore, due to the fact that the upper bound in this system scales as $d^{3/2}$ while the lower bound scales as $d$, in terms of assessing the tightness of the bound with regard to scaling, the scaling of the upper bound deviates from the true scaling by at most $d^{1/2}$, which is sub-linear.

\subsection{\label{sec:randsingle} Random graphs}
We proceed by considering random drift Hamiltonians that correspond to random connected graphs. Throughout this section the couplings $|g_{n,m}|$ are chosen to be uniformly random in the interval $[1,2]$ so that $g_{\text{min}}=1$ and we study the validity of the bounds \eqref{eq:upperboundtnm} and \eqref{eq:14} for quantum systems of dimension $d\in[2,6]$, noting that for a fixed $d\in[2,6]$ we have $\{1,2,6,21,112\}$ distinct connected graph.  

\subsubsection{Single edge operations} We begin by studying the tightness of the bound \eqref{eq:upperboundtnm} by considering random  single edge operations $S_{n,m}(\alpha)$ by picking $\alpha$ uniformly random in $[-\frac{\pi}{2},\frac{\pi}{2}]$ and 10 random edges amongst all $\binom{d}{2}$ of the complete graph. According to \eqref{eq:upperboundtnm} for $g_{\text{min}}=1$ the time $t_{n,m}$ to implement such a random $S_{n,m}$ is then upper bounded by 
\begin{align}
\label{singleedgebound}
t_{n,m}\leq \pi\left(d-\frac{3}{2}\right),
\end{align}
which is shown as a function of $d$ (black curve) in figure 2. The green and the orange curves correspond to the times to implement $S_{n,m}$ obtained from numerical gate optimization, where we plotted the average (green) taken over 10 randomly chosen $S_{n,m}$ and the number of different distinct connected graphs for $d\in[2,6]$ (i.e., the average was taken over $\{10,20,60,20,1120\}$ different runs), whereas  the orange curve shows the maximum value.  

From figure 2 we see that the upper bound is indeed above the maximum amongst random GRAPE runs. As expected, while the average over random single-edge operations tends to be lower than the upper bound, the maximum remains relatively close to the bound \eqref{singleedgebound}. Remarkably, all three data sets are linear in $d$, showing that the upper bound captures the scaling of the minimum time reasonably well.
 
\begin{figure}[H]
\begin{center}
\includegraphics[width=0.60\columnwidth]{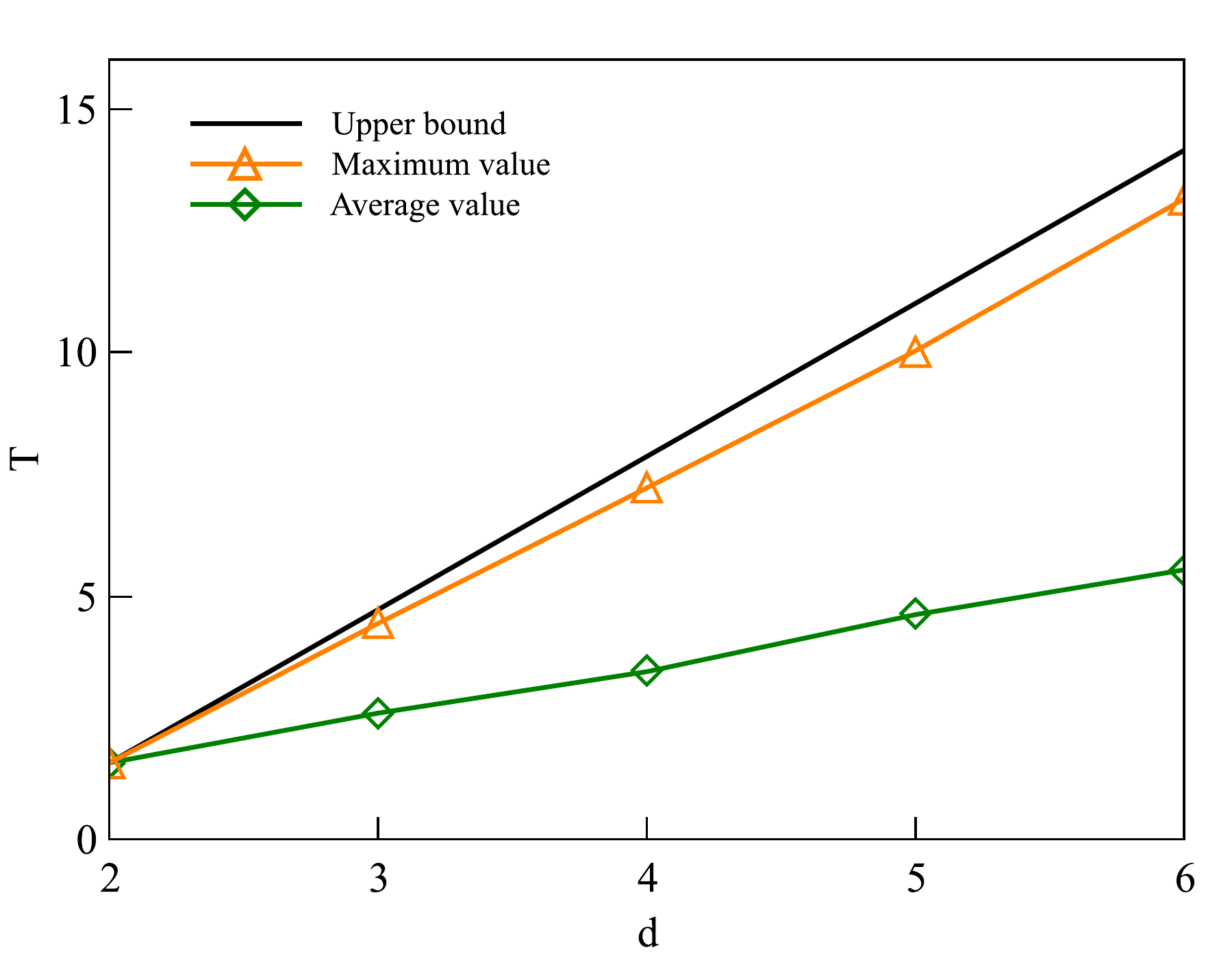}
\caption{ Comparison of the time $T$ to implement a random single edge operation given by \eqref{eq:intunitary} on a quantum system of dimension $d$ described by a randomly chosen connected graph obtained from numerical gate optimization using GRAPE, with the upper bound (black) given in \eqref{singleedgebound}. The green curves shows the average over $\{10,20,60,20,1120\}$ with each value corresponding to a fixed $d\in[2,6]$ and the orange curve shows the maximum value. Further details can be found in the main body of the manuscript.}
\end{center}
\end{figure}

\subsubsection{\label{sec:randgen} General Behavior}
Finally, we test the tightness of the upper bound \eqref{eq:14} for the time $T$ to implement generic unitary operations $U_{g}\in\text{U}(d)$. For $g_{\text{min}}=1$ we have, 
\begin{align}\label{generalbound}
T\leq \frac{\pi}{2}d^{2}(d-1).
\end{align}
\begin{figure}
\begin{center}
\includegraphics[width=0.60\columnwidth]{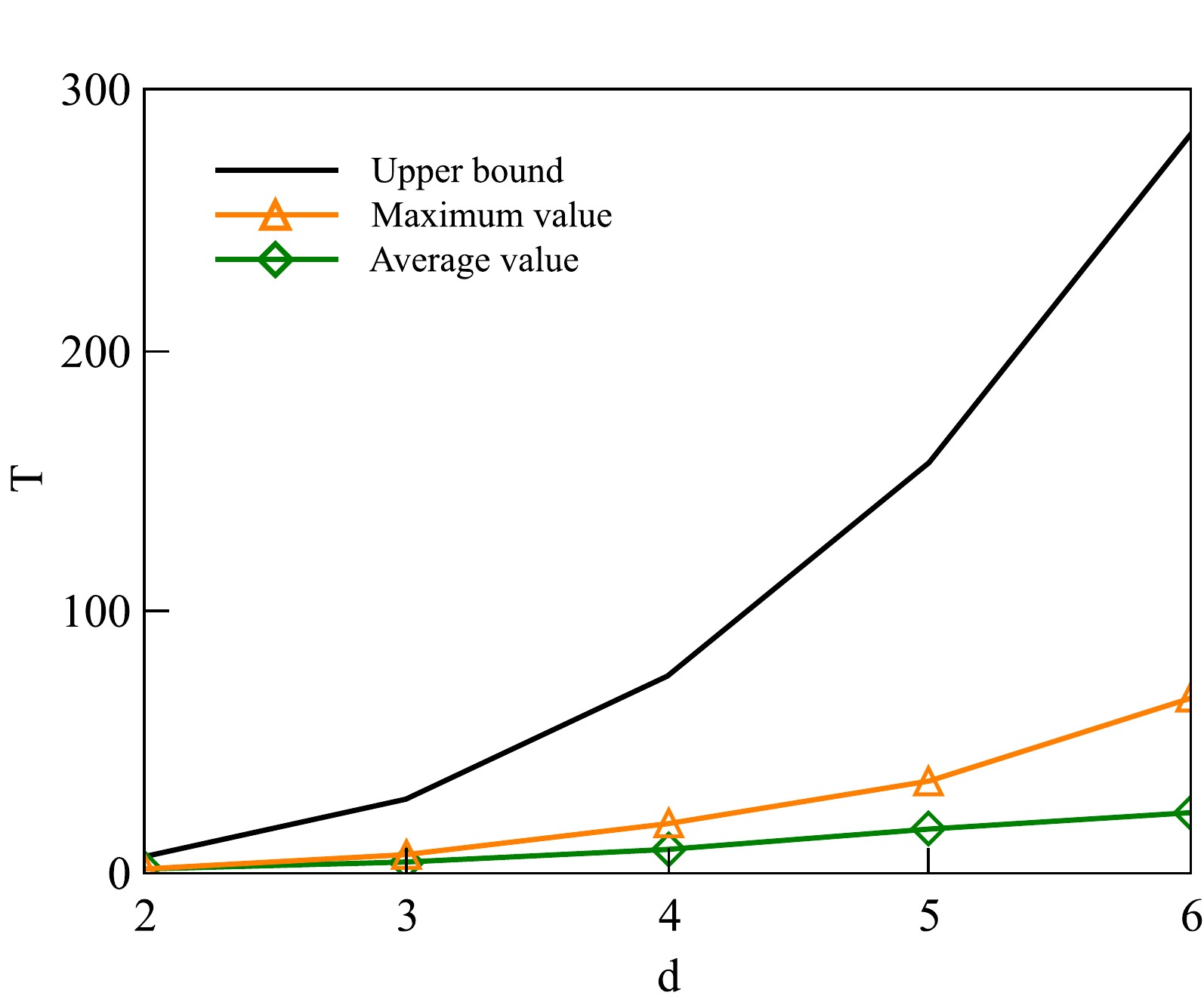}
\caption{ Comparison of the time $T$ to implement a random unitary operation on a quantum system of dimension $d$ described by a randomly chosen connected graph obtained from numerical gate optimization using GRAPE, with the upper bound (black) given in \eqref{eq:14}. The green curves shows the average over $\{10,20,60,20,1120\}$ with each value corresponding to a fixed $d\in[2,6]$ and the orange curve shows the maximum value. Further details can be found in the main body of the manuscript.}
\end{center}
\end{figure}
As in the single-edge operation case, for each dimension $d$ we run the trials on every distinctly homeomorphic connected graph. We construct random unitary operations $U_{g}=\exp(-iH)$ by picking a random hermitian matrix $H$. 

From the results in figure 3 we again note that the upper bound is indeed above the maximum amongst random GRAPE runs. 
Furthermore, we can see a "similar" polynomial scaling for the maximum and the average times. 
\section{\label{sec:disc} Conclusions}
We have derived an upper bound for the time $T$ to implement a generic unitary transformation on a quantum system in which the diagonal element (in a given basis) can be controlled arbitrarily. This was achieved by first describing the considered system as an undirected and connected graph, followed by showing that edges of the graph can be removed without a time cost using a decoupling sequence generated by the controlled diagonal elements. Afterwards we showed in Lemma 1 that the time to implement unitary operations generated by generic edges of the complete graph scales at most linearly in the dimension of the system. Consequently, every unitary transformation can be implemented in a time at most $\mathcal O(d^{3})$, which was summarized in Theorem 1. It is interesting to note that the corresponding upper bound on $T$ given in \eqref{eq:14} is independent of the target unitary transformation and the accuracy the unitary transformation is implemented. 

Based on the results in \cite{arenz2018controlling} we also derived an upper bound for the time to create a unitary transformation in a qubit network in which each qubit can be locally controlled in terms of the number of CNOT gates needed to create the unitary transformation.

By considering examples we numerically studied the tightness of the obtained bounds and found that the bounds capture the system size dependence of $T$ remarkably well. 

One of the key assumptions in this works was to assume that the control fields are unconstrained so that interactions can be removed instantaneously through a decoupling sequence. Recently, however, it was shown \cite{lloyd2019efficient} that under some assumptions on the drift Hamiltonian even limited control fields can yield a desirable scaling of the minimum time for implementing unitary transformations. It would be interesting to combine the approaches used in \cite{lloyd2019efficient} with the results obtained here to characterize the number and the type of controls needed to efficiently control quantum systems.  \\

\textit{Acknowledgements} -- J.L. acknowledges financial support from the Program in Plasma Science and Technology at Princeton University. C. A. acknowledges funding from the DOE (DE-FG02-02ER15344) and H. R. acknowledges funding from the NSF (grant  CHE-1763198).       

\bibliographystyle{unsrt}
\bibliography{references.bib}
%\newpage
\appendix
\section{A Lower Bound for the d-level system}\label{ap1}
Using the results obtained in \cite{Lee2018} we show here that for the controlled d-Level system described by 
\begin{align}
H(t)=J\sum_{j=1}^{d-1} \left(\ket{j}\bra{j+1}+\ket{j+1}\bra{j} \right)+\sum_{j=1}^{d}f_{j}(t)\ket{j}\bra{j}, 
\end{align}
where $J=\frac{1}{\sqrt{2(d-1)}}$, the time $T$ to implement a SWAP operation
\begin{align}
U_{g}=\exp\left(-i\frac{\pi}{2}\left(\ket{1}\bra{d}+\ket{d}\bra{1}\right)\right),
\end{align}
between the 1st and the $d$th levels is lower bounded by 
\begin{align}
\label{eq:boundswap}
\sqrt{2}(d-1)\leq T. 
\end{align}
From \cite{Lee2018} we have that in general for a control system of the form
\begin{align}
H(t)=H_{0}+\sum_{k=1}^{n}f_{k}(t)H_{k}, 
\end{align}
evolving on the unitary group $\text{U}(d)$ the time $T$ to implement a generic $U_{g}\in \text{U}(d)$ is lower bounded by 
\begin{align}
\label{eq:lower}
\max_{V\in \bigcap_{k}\text{Stab}(iH_{k})}\frac{\Vert [U_{g},V]\Vert }{\Vert [H_{0},V]\Vert}\leq T,
\end{align}
valid for any unitarily invariant norm where $\text{Stab}(iH_{k})$ denotes the stabilizer of $iH_{k}$ defined as $\text{Stab}(x)=\{U\in\text{U}(d)\,|\,U^{\dagger}xU=x\}$ for some $x\in\mathfrak u(d)$.

For the d-level control system the intersection of the stabilizers is given by  $V=\text{diag}(e^{i\theta_{1}},\ldots,e^{i\theta_{d}})$ so that using the Hilbert Schmidt norm defined as $\Vert A\Vert=\sqrt{\text{Tr}\{A^{\dagger}A\}}$ yields 
\begin{align*}
\frac{\Vert [U_{g},V]\Vert^{2} }{\Vert [H_{0},V]\Vert^{2}}= \frac{1}{J^{2}}\cdot \frac{1-\cos{(\theta_{1}-\theta_{d})}}{d-1-(\cos{(\theta_{1}-\theta_{2})}+\ldots+\cos{(\theta_{d}-\theta_{d-1})})}.
\end{align*}
By defining $x_{i}\equiv\theta_{i}-\theta_{i+1}$ and 
\begin{align}
\label{eq:righthands}
S(x_{1},\cdots, x_{d-1})\equiv \frac{1-\cos(\sum_{i=1}^{d-1}x_{i})}{1-\frac{1}{d-1}\sum_{i=1}^{d-1}\cos(x_{i})},
\end{align} 
the maximization in \eqref{eq:lower} is then equivalent to maximizing $S(x_{1},\cdots,x_{d-1})$ over all $x_{i}$.
We claim that this quantity is maximized when $x_{i}=0$ for all $i$, which implies that $V$ is given by the identity (up to a global phase). To see this we set $x_{i}=x$ for all $i$ and take the limit,  
\begin{align*}
\lim_{x\rightarrow 0}S(x) = (d-1)^{2},  
\end{align*}
which is indeed the maximum. We therefore have $2(d-1)^{2}\leq T^{2}$, which yields the  desired result \eqref{eq:boundswap}. 

\end{document}